\documentclass[conference,doublecolumn]{IEEEtran}

\newtheorem{thm}{Theorem}
\newtheorem{cor}{Corollary}
\newtheorem{lem}{Lemma}

\newcommand{\entropy}[1]{\mathsf{h}\left(#1\right)}

\usepackage[final]{graphicx}
\usepackage[reqno]{amsmath}
\usepackage{amssymb}
\usepackage{subfig}
\usepackage{epstopdf}



\begin{document}

\title{Degrees of Freedom (DoF) of Locally Connected Interference Channels with Coordinated Multi-Point (CoMP) Transmission}
\author{{\large{Aly El Gamal, V.~Sreekanth Annapureddy, and Venugopal V.~Veeravalli}}\\ \large{ECE Department and Coordinated Science Laboratory}\\\large{University of Illinois at Urbana-Champaign}}

\maketitle

\begin{abstract}
The degrees of freedom (DoF) available for communication provides an analytically tractable way to characterize the information-theoretic capacity of interference channels.  In this paper, the DoF of a $K$-user interference channel is studied under the assumption that the transmitters can cooperate via coordinated multi-point (CoMP) transmission. In~\cite{wigger-wynerasymm}, the authors considered the linear asymmetric model of Wyner, where each transmitter is connected to its own receiver and its successor, and is aware of its own message as well as $M-1$ preceding messages. The per user DoF was shown to go to $\frac{M}{M+1}$ as the number of users increases to infinity. In this work, the same model of channel connectivity is considered, with a relaxed cooperation constraint that bounds the maximum number of transmitters at which each message can be available, by a {\em cooperation order} $M$. We show that the relaxation of the cooperation constraint, while maintaining the same load imposed on a backhaul link needed to distribute the messages, results in a gain in the DoF. In particular, the asymptotic limit of the per user DoF under the cooperation order constraint is $\frac{2M}{2M+1}$. Moreover, the optimal transmit set selection satisfies a {\em local cooperation} constraint. i.e., each message needs only to be available at neighboring transmitters.  
\end{abstract}

\section{Introduction}
As a result of developments in the infrastructure of cellular networks, there has been a recent growing interest in the potential of {\em cooperative} transmission techniques where, through a backhaul link, messages can be available at more than one transmitter, i.e., Coordinated Multi-Point (CoMP) transmission. This new development has a proven advantage~(see, e.g.,~\cite{CoMP-book}) for mitigating the effect of interfering signals, in particular, for cell-edge users.  

The number of degrees of freedom (DoF) available for communication in a given channel is defined as the pre-log factor of its sum capacity. This criterion provides an analytically tractable way to characterize the sum capacity and captures the number of \emph{interference-free} sessions in a given multiuser channel.  In~\cite{Madsen-Nosratinia}, the DoF for the fully connected $K-$user Gaussian interference channel was shown to be upper bounded by half the number of users $K/2$ ($1/2$ per user DoF). This was shown to be achievable through the interference alignment (IA) scheme in~\cite{Cadambe-IA}. However, the IA scheme makes use of symbol extensions, i.e., coding over multiple realizations of the channel, and the $1/2$ per user DoF is achievable only in the asymptotic limit as the number of symbol extensions increases to infinity.  In this work, it will be of interest to compare the DoF and its achieving coding scheme for the fully connected Gaussian interference channel with that of a scenario with two different aspects of practical relevance, namely, CoMP transmission and local connectivity.  

In~\cite{wigger-wynerasymm}, the authors considered the linear asymmetric channel model which was first introduced by Wyner~\cite{Wyner}, where each transmitter is connected to its own receiver and one successive receiver. For this channel model, they characterized the DoF under a cooperation model that assumes each user's transmitter to be aware of its own message as well as messages belonging to $M-1$ preceding users. The per user DoF was shown to go to $\frac{M}{M+1}$ as $K$ increases to $\infty$. An interesting feature of the achievable scheme in this case is that it does not use symbol extensions. The result of~\cite{wigger-wynerasymm} suggests that the addition of CoMP transmission and local connectivity, not only provides a more realistic model, but leads to a simpler solution.

In this work, we relax the message assignment assumption of~\cite{wigger-wynerasymm}, and consider all possible assignments that satisfy a {\em cooperation order} constraint. That is, we limit the maximum number of transmitters at which any given message can be available, or the maximum size of a transmit set, by $M$. Our main result shows that this relaxation leads to an asymptotic limit of the per user DoF for Wyner's linear asymmetric model of $\frac{2M}{2M+1}$, which is strictly higher than that of~\cite{wigger-wynerasymm}. Moreover, the optimal message assignment is shown to satisfy a {\em local cooperation} constraint. i.e., each message needs to be available only at neighboring transmitters, thus retaining the same advantages as the message assignment considered in~\cite{wigger-wynerasymm}.


We provide a precise formulation of the problem in Section~\ref{sec:systemmodel}. In Section~\ref{sec:mainresult}, we state and prove our main result, i.e., a characterization of the limit of the per user DoF in Wyner's linear asymmetric model as the number of users increases. We then discuss the result in Section~\ref{sec:discussion}. Finally, in Section~\ref{sec:conclusion}, we provide concluding remarks.

\section{System Model and Notation}\label{sec:systemmodel}
We use the standard model for the $K-$user interference channel with a single antenna at each node.
\begin{equation*}
Y_i(t) = \sum_{j=1}^{K} h_{ij}(t) X_j(t) + Z_i(t),i\in[K]
\end{equation*}
where $t$ is the time index, $X_i(t)$ is the transmitted signal of transmitter $i$, $Y_i(t)$ is the received signal of receiver $i$, $Z_i(t)$ is the zero mean unit variance Gaussian noise at receiver $i$, $h_{ij} (t)$ is the channel coefficient from transmitter $j$ to receiver $i$ over the $t^{th}$ time slot, and $[K]$ denotes the set $\{1,2,\ldots,K\}$. 

For any set ${\cal A} \subseteq [K]$, we define the complement set $\bar{\cal A} = \{i: i\in[K], i\notin {\cal A}\}$. For each $i \in [K]$, let $W_i$ be the message intended for receiver $i$, we use the abbreviations $W_{\cal A}$, $X_{\cal A}$, and $Y_{\cal A}$ to denote the sets $\{W_i, i\in {\cal A}\}$, $\{X_i, i\in {\cal A}\}$, and $\{Y_i, i\in {\cal A}\}$, respectively.

\subsection{Channel Model}
Each transmitter is connected to its corresponding receiver as well as one following receiver, and the last transmitter is connected only to its corresponding receiver. More precisely,

\begin{equation}\label{eq:channel}
h_{ij} \neq 0 \text { if and only if } i \in \{j,j+1\},\forall i,j \in [K]
\end{equation}

All the channel coefficients are assumed to be fixed and known at all transmitters and receivers. The channel model is illustrated for $K=3$ in Figure~\ref{fig:wynermodel}.

\begin{figure}[htb]
\centering
\includegraphics[width=0.8\columnwidth]{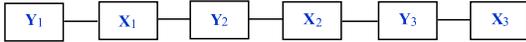}
\caption{Wyner's linear asymmetric model for $K=3$. In the figure, a solid line connects a transmitter-receiver pair if and only if the channel coefficient between them is non-zero.}
\label{fig:wynermodel}
\end{figure}

\subsection{Cooperation Model}
For each $i \in [K]$, let ${\cal T}_i \subseteq [K]$ be the transmit set of message $W_i$. The transmitters in ${\cal T}_i$ cooperatively transmit the message $W_i$ to the receiver $i$. The messages $\{W_i\}$ are assumed to be independent of each other. The \emph{cooperation order} $M$ is defined as the maximum size of a transmit set:
\begin{equation}\label{eq:coop_order}
M = \max_i |{\cal T}_i|.
\end{equation}
\subsection{Degrees of Freedom}
The total power constraint across all the users is $P$.  The rates $R_i(P) = \frac{\log|W_i|}{n}$ are achievable if the error probabilities of all messages can be simultaneously made arbitrarily small for large enough $n$. The capacity region $\mathcal{C}(P)$ is the set of all achievable rate tuples. The DoF ($\eta$) is defined as $\limsup_{P \rightarrow \infty}\frac{ C_{\Sigma}(P)}{\log P}$, where $C_\Sigma(P)$ is the sum capacity. Since $\eta$ depends on the specific choice of transmit sets, we define $\eta(K,M)$ as the best achievable $\eta$ over all choices of transmit sets satisfying the cooperation order constraint in \eqref{eq:coop_order} for a $K-$user channel satisfying~\eqref{eq:channel}.  We define the per user DoF $\tau(M)$ to measure how the sum degrees of freedom scales with the number of users for a fixed cooperation order.
\begin{equation}
\tau(M) = \lim_{K\rightarrow \infty} \frac{\eta(K,M)}{K}
\end{equation}

It is worth noting here that modifying the channel model such that the channel coefficient between the last transmitter and first receiver ($h_{1K}$) is non-zero (cyclic model) does not change the value of $\tau(M)$.

\section{Example: $M=1$}\label{sec:example}
In~\cite{wigger-wynerasymm}, it is shown that the per user DoF equals $\frac{M}{M+1}$  if we fix the transmit sets to be of the form.
\begin{equation}\label{eq:wigger-txset}
{\cal T}_i = \{i, i+1, \ldots, i+M-1\}
\end{equation}

Before coming to our main result, we provide a simple example that motivates our revisit of the work in~\cite{wigger-wynerasymm} with the cooperation order constraint~\eqref{eq:coop_order}. Consider the case where CoMP transmission is not allowed. i.e., $M=1$. Assuming that each message is only available at its corresponding transmitter, then the DoF for this channel can be shown to be $\frac{K}{2}$ where $K$ is even. Now, we relax the constraint that sets each message only at its own transmitter, to another that allows each message to be available at only one transmitter - without specifying that transmitter. Let $W_1, W_3$, be available at $X_1, X_2$, respectively, and deactivate both the second receiver $Y_2$ and the third transmitter $X_3$, then it is easily seen that messages $W_1$ and $W_3$ can be received without interfering signals at their corresponding receivers. Moreover, the deactivation of $X_3$ splits this part of the network from the rest. i.e., the same scheme can be repeated by assigning $W_4, W_6$, to $X_4, X_5$, respectively, and so on. Thus, $2$ degrees of freedom can be achieved for each set of $3$ users, thereby, achieving $\frac{2K}{3}$ DoF where $K$ is a multiple of $3$. The described message assignment is depicted in Figure~\ref{fig:examplemone}. It is evident now that a constraint that is only a function of the load on the {\em backhaul} link may lead to a discovery of better message assignments than the one considered in~\cite{wigger-wynerasymm}. In Section~\ref{sec:mainresult}, we show that the optimal message assignment under the  cooperation order constraint~\eqref{eq:coop_order} is different from the one defined in~\eqref{eq:wigger-txset} .

\begin{figure}[htb]
\centering
\includegraphics[width=0.8\columnwidth]{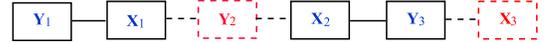}
\caption{Achieving $2/3$ per user DoF for $M=1$. Each transmitter is carrying a message for the receiver connected to it by a solid line. Figure showing only signals corresponding to the first $3$ users in a general $K-$user network. Signals in dashed boxes are deactivated. Note that the deactivation of $X_3$ splits this part of the network from the rest.}
\label{fig:examplemone}
\end{figure}

\section{Useful Message Assignments}\label{sec:usefulmsgdistributions}
In order to characterize the per user DoF $\tau(M)$, we have to consider all possible message assignments satisfying the cooperation order constraint~\eqref{eq:coop_order}. In this section, we characterize necessary conditions for the optimal message assignment. The constraints we provide for transmit sets are governed by the connectivity pattern of the channel. For example, for the case where $M=1$, any assignment of message $W_i$ to a transmitter that is not connected to $Y_i$ is not \emph{useful}, i.e., the rate of transmitting message $W_i$ has to be zero for those assignments. 

For message $W_i$, and a fixed transmit set ${\cal T}_i$, we construct the following graph $G_{W_i,{\cal T}_i}$ that has $[K]$ as its set of vertices, and an edge exists between any given pair of vertices $x,y \in [K]$ if and only if,
\begin{itemize}
\item  $x,y \in {\cal T}_i$. 
\item Corresponding transmitters are both connected to at least one receiver. In the channel model defined in~\eqref{eq:channel}, this condition reduces to $(x-y) \in \{-1,1\}$.
\end{itemize}

Vertices corresponding to transmitters connected to $Y_i$ are given a special mark, i.e., vertices with labels $i$ and $i-1$ are marked for the considered channel model. In Figure~\ref{fig:usefulmsgassignment}, we give an example for the construction of $G_{W_i,{\cal T}_i}$ and the application of the following lemma.

\begin{lem}\label{lem:usefulmsgdistributions}
For any $k \in {\cal T}_i$ such that the vertex $k$ in $G_{W_i,{\cal T}_i}$ is not connected to a marked vertex, removing $k$ from ${\cal T}_i$ does not decrease the sum rate.
\begin{proof}
Let ${\cal S}$ denote the set of indices of vertices in a component with no marked vertices, ${\cal S}'$ be the set of indices of received signals that are connected to at least one transmitter with an index in ${\cal S}$. To prove the lemma, we consider two scenarios, where we add a {\em tilde} over symbols denoting rates and signals belonging to the second scenario. For the first scenario, $W_i$ is made available at transmitters in ${\cal S}$. Let $Q$ be a random variable that is independent of all messages and has the same distribution as $W_i$, then for the second scenario, $W_i$ is not available at transmitters in ${\cal S}$, and a realization $q$ of $Q$ is generated and given to all nodes in $\tilde{X}_{\cal S} \cup \tilde{Y}_{\cal S'}$ before communication starts. Moreover, the given realization $Q=q$ contributes to the encoding of $\tilde{X}_{\cal S}$ in the same fashion as a message $W_i=q$ contributes to $X_{\cal S}$. Assuming a reliable communication scheme for the first scenario that uses a large block length $n$, the following argument shows that the achievable sum rate is also achievable after removing $W_i$ from the designated transmitters.
\begin{eqnarray*}
n\sum_j R_j &=& \sum_j H(W_j) \notag
\\&\overset{(a)}{\leq}& \sum_j I(W_j;Y_j) + o(n) \notag
\\&=& \sum_{j \in {\cal S}'^c} I(W_j,Y_j) + \sum_{j \in {\cal S}'} I(W_j;Y_j)+ o(n) \notag
\\&\overset{(b)}{=}& \sum_{j \in {\cal S}'^c} I(W_j,\tilde{Y}_j) + \sum_{j \in {\cal S}'} I(W_j;Y_j)+ o(n) \notag
\\&\leq&\sum_{j \in {\cal S}'^c} I(W_j,\tilde{Y}_j) + \sum_{j \in {\cal S}'} I(W_j;Y_j|W_i)+ o(n) \notag
\\&\overset{(c)}{=}&\sum_{j \in {\cal S}'^c} I(W_j,\tilde{Y}_j) + \sum_{j \in {\cal S}'} I(W_j;\tilde{Y}_j|Q)+ o(n) \notag
\\&\leq& n\sum_j \tilde{R}_j + o(n)
\end{eqnarray*}
where $(a)$ follows from Fano's inequality, $(b)$ follows as the difference between the two scenarios lies in the encoding of $X_{\cal S}$ which affects only $Y_{{\cal S}'}$, and $(c)$ follows as there are no transmitters outside $X_{\cal S}$ that are carrying $W_i$ and connected to $Y_{{\cal S}'}$, and $Y_i \notin Y_{{\cal S}'}$.
\end{proof}
\end{lem}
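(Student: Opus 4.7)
The plan is to isolate the connected component of $G_{W_i,{\cal T}_i}$ that contains $k$ and argue that the copies of $W_i$ supplied through that component are useless to receiver $i$ and only contribute interference at the receivers the component touches. Let ${\cal S}$ be the set of vertex labels in this component; because the component carries no marked vertex, no transmitter indexed by ${\cal S}$ is connected to $Y_i$, so deleting $W_i$ from those transmitters cannot remove any useful signal path for the decoder of $W_i$. Let ${\cal S}'$ denote the set of receiver indices fed by transmitters in ${\cal S}$; the only way the copies of $W_i$ at ${\cal S}$ interact with the rest of the network is through $Y_{{\cal S}'}$, and importantly $i \notin {\cal S}'$.

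I would then set up a coupling between two scenarios. In scenario~1, $W_i$ is present at all of ${\cal T}_i$. In scenario~2, I strip $W_i$ from the transmitters in ${\cal S}$, draw a fresh dummy $Q$ with the same distribution as $W_i$ and independent of all messages, and reveal $Q$ as common randomness to the transmitters in ${\cal S}$ and to the receivers in ${\cal S}'$. The encoders in ${\cal S}$ use $Q$ in place of $W_i$ via the same encoding map, so the joint law of the channel variables in scenario~2 matches that of scenario~1 after replacing $W_i$ by $Q$; in particular the outputs at receivers outside ${\cal S}'$ are statistically unchanged.

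The information-theoretic step is a Fano-plus-split calculation. Starting from $n\sum_j R_j \leq \sum_j I(W_j;Y_j) + o(n)$, I split the sum into $j \notin {\cal S}'$ and $j \in {\cal S}'$. For $j \notin {\cal S}'$, the pair $(W_j,Y_j)$ has the same joint distribution in the two scenarios, so $I(W_j;Y_j)=I(W_j;\tilde{Y}_j)$. For $j \in {\cal S}'$, I enlarge by conditioning on $W_i$, which is permitted since $W_i \perp W_j$, and then translate $I(W_j;Y_j\mid W_i)$ in scenario~1 into $I(W_j;\tilde{Y}_j\mid Q)$ in scenario~2 through the coupling. Recombining and applying Fano in the reverse direction to scenario~2 yields $\sum_j R_j \leq \sum_j \tilde{R}_j + o(1)$, which is the claim.

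The main obstacle is ensuring that the substitution step in the $j \in {\cal S}'$ case is airtight: the conditional law of $\tilde{Y}_j$ given $(W_j,Q)$ must coincide with the conditional law of $Y_j$ given $(W_j,W_i)$. This is exactly where the hypothesis that ${\cal S}$ contains no marked vertex is used, since it guarantees that no transmitter outside ${\cal S}$ carrying $W_i$ is connected to a receiver in ${\cal S}'$; otherwise conditioning on the $W_i$ (or $Q$) copy inside ${\cal S}$ would not be enough to align the two distributions, and the chain would break.
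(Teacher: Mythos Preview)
Your proposal is essentially the paper's proof: the same component ${\cal S}$, the same receiver set ${\cal S}'$, the same dummy-variable coupling with $Q$, and the same Fano-plus-split mutual-information chain with the identical three key justifications. One minor sharpening: the fact that no transmitter outside ${\cal S}$ carrying $W_i$ is connected to a receiver in ${\cal S}'$ follows from ${\cal S}$ being a \emph{maximal} connected component of $G_{W_i,{\cal T}_i}$ (any such transmitter would share a receiver with some vertex of ${\cal S}$ and hence be joined to it), while the no-marked-vertex hypothesis is what gives $i\notin{\cal S}'$; both ingredients are needed for step~(c).
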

We call a message assignment \emph{useful} if no element in it can be removed without decreasing the sum rate. The following corollary to the above lemma characterizes a necessary condition for any message assignment satisfying the cooperation order constraint in~\eqref{eq:coop_order} to be useful.
\begin{cor}\label{cor:usefulmsgdistribution}
Let ${\cal T}_i$ be a useful message assignment and $|{\cal T}_i| \leq M$, then $\forall k\in[K], k\in{\cal T}_i$ only if the vertex $k$ in $G_{W_i,{\cal T}_i}$ lies at a distance that is less than or equal $M-1$ from a marked vertex.
\end{cor}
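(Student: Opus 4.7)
The plan is to derive the corollary directly from Lemma~\ref{lem:usefulmsgdistributions} by arguing in the contrapositive: I will assume some $k\in{\cal T}_i$ has graph-distance strictly greater than $M-1$ from every marked vertex of $G_{W_i,{\cal T}_i}$, and show that $k$ must lie in a connected component that contains no marked vertex at all, so that Lemma~\ref{lem:usefulmsgdistributions} immediately certifies ${\cal T}_i$ as non-useful.

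The key structural observation is that, because the only edges in $G_{W_i,{\cal T}_i}$ join vertices whose indices differ by $1$, each connected component of $G_{W_i,{\cal T}_i}$ corresponds to a maximal run of consecutive integers inside ${\cal T}_i$. This run is a simple path, so the diameter of any component equals one less than its number of vertices. Using the hypothesis $|{\cal T}_i|\leq M$, every component therefore has at most $M$ vertices, and hence diameter at most $M-1$.

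Combining these two ingredients gives the result. If the connected component of $G_{W_i,{\cal T}_i}$ that contains $k$ held any marked vertex (i.e.\ $i-1$ or $i$), then by the diameter bound above $k$ would be at distance at most $M-1$ from that marked vertex, contradicting our assumption. So $k$ lies in a component with no marked vertex, which is exactly the hypothesis of Lemma~\ref{lem:usefulmsgdistributions}; applying that lemma shows $k$ can be deleted from ${\cal T}_i$ without decreasing the sum rate, so ${\cal T}_i$ was not useful.

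I do not anticipate a real obstacle here: once the picture of connected components as intervals of consecutive indices is in hand, the diameter bound is automatic from $|{\cal T}_i|\leq M$, and Lemma~\ref{lem:usefulmsgdistributions} supplies the removability conclusion with no additional work. The main thing to be careful about is simply to state clearly why the components of $G_{W_i,{\cal T}_i}$ are paths on consecutive indices (a direct consequence of the channel connectivity pattern in~\eqref{eq:channel}) before invoking the diameter argument.
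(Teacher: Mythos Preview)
Your proposal is correct and is exactly the intended derivation: the paper offers no separate proof of the corollary, treating it as an immediate consequence of Lemma~\ref{lem:usefulmsgdistributions}, and your argument (components of $G_{W_i,{\cal T}_i}$ are paths on consecutive indices, hence have diameter at most $|{\cal T}_i|-1\le M-1$, so any $k$ farther than $M-1$ from every marked vertex lies in an unmarked component and is removable by the lemma) is precisely the natural way to fill in that step.
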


\begin{figure}[htb]
\centering
\includegraphics[width=0.8\columnwidth]{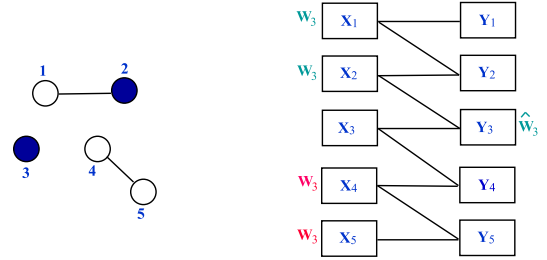}
\caption{Figure showing the construction of $G_{W_3,{\cal T}_3}$ in a $5-$user channel. Marked vertices are represented with filled circles. $W_3$ can be removed at both $X_4$ and $X_5$ without decreasing the sum rate, as the corresponding vertices lie in a component that does not contain a marked vertex.}
\label{fig:usefulmsgassignment}
\end{figure}

\section{Main Result}\label{sec:mainresult}\label{sec:mainresult}

Here, we provide an asymptotic characterization of $\tau(M)$. It is worth noting that the following result shows that the per user DoF under the general cooperation order constraint~\eqref{eq:coop_order} is strictly greater than the $\frac{M}{M+1}$ per user DoF shown in~\cite{wigger-wynerasymm} for the message assignment defined in~\eqref{eq:wigger-txset}.

\begin{thm}\label{thm:Asymmetric Model}
\begin{equation}
\tau(M) = \frac{2M}{2M+1}
\end{equation}
\end{thm}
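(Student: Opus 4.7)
The plan is to establish matching achievability and converse bounds on $\tau(M)$.

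For the achievability direction $\tau(M) \ge \frac{2M}{2M+1}$, I would partition the $K$ users into contiguous blocks of $2M+1$ consecutive users and design each block to support $2M$ interference-free streams, yielding the claimed per-user DoF in the limit. Within a generic block relabelled $\{1,\dots,2M+1\}$, I deactivate the middle receiver $Y_{M+1}$ together with the final transmitter $X_{2M+1}$; deactivating $X_{2M+1}$ removes the only cross-block link $X_{2M+1}\to Y_{2M+2}$ and so makes consecutive blocks information-theoretically independent. On the left half I set ${\cal T}_i=\{i,i+1,\dots,M\}$ for $i=1,\dots,M$, which satisfies $|{\cal T}_i|=M-i+1\le M$; on the right half I set the mirror assignment ${\cal T}_{M+1+j}=\{M+1,\dots,M+j\}$ for $j=1,\dots,M$, again within the cooperation budget. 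Then $Y_1$ sees only $X_1$ which carries only $W_1$, so $d_1=1$; inductively $Y_i$ receives a linear combination of $W_1,\dots,W_i$ via $X_{i-1}$ and $X_i$, and because each of $W_1,\dots,W_{i-1}$ is also carried by $X_i$, the coefficients at $X_i$ can be chosen to cancel the leakage of these earlier messages through $X_{i-1}$, leaving a clean $W_i$ channel. A symmetric ZF argument handles the right half, so each block delivers $2M$ DoF from $2M+1$ users.

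For the converse $\tau(M)\le \frac{2M}{2M+1}$, I first invoke Corollary~\ref{cor:usefulmsgdistribution} to restrict attention to useful assignments, where ${\cal T}_i\subseteq\{i-M,\dots,i+M-1\}$. The central step is a block inequality: for every $2M+1$ consecutive users and every useful assignment, the sum DoF of those users is at most $2M+o(1)$. Summed over a partition of $[K]$ into such blocks this gives $\eta(K,M)\le \frac{2MK}{2M+1}+O(1)$ and hence the bound. I would prove the block inequality by a genie argument in the spirit of Lemma~\ref{lem:usefulmsgdistributions}: for the target block ${\cal B}=\{n+1,\dots,n+2M+1\}$, hand each receiver in ${\cal B}$ as genie side-information (i) every message whose transmit set meets $\bar{\cal B}$, a bounded set by the locality from Corollary~\ref{cor:usefulmsgdistribution}, and (ii) the received signals $Y_j$ for $j\notin {\cal B}$. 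This reduces the problem to an isolated $(2M+1)$-user sub-network, and a rank/counting argument exploiting the fact that each message occupies at most $M$ of the $2M+1$ available transmit slots shows that the sub-network's sum DoF cannot reach $2M+1$ and hence is at most $2M$.

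The main obstacle is the converse. Achievability reduces to a routine per-block ZF computation once the block structure is fixed, whereas the converse must handle every useful assignment permitted by \eqref{eq:coop_order} uniformly, and the combinatorial freedom in choosing the ${\cal T}_i$'s produces qualitatively different interference patterns inside each block. The locality guarantee from Corollary~\ref{cor:usefulmsgdistribution} is what makes the genie argument feasible: it keeps the side-information of size $O(1)$ per block and it bounds the boundary leakage between blocks by an $O(1)$ DoF loss that vanishes in the per-user limit.
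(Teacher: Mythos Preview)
Your achievability is correct and is essentially the paper's scheme: the same $2M+1$-block partition, the same deactivation of $Y_{M+1}$ and $X_{2M+1}$, the same mirrored transmit sets, and the same zero-forcing cancellation.

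Your converse, however, has a genuine gap. The genie you propose for a block ${\cal B}$ is too generous: handing the receivers in ${\cal B}$ all of $Y_{\bar{\cal B}}$ lets them recover extra streams through the cross-block link $X_{\text{last}({\cal B})}\to Y_{\text{last}({\cal B})+1}$, and the block bound $2M$ then fails. Concretely, take $M=1$, $K=6$, ${\cal T}_i=\{i\}$, and ${\cal B}=\{1,2,3\}$. Your genie supplies $W_4,W_5,W_6$ and $Y_4,Y_5,Y_6$; since $X_4$ is then known, $Y_4-h_{44}X_4=h_{43}X_3+Z_4$ is a clean extra observation of $X_3$, and together with $Y_1,Y_2,Y_3$ the block resolves all of $W_1,W_2,W_3$ for a bound of $3$, not $2$. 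Summing such block bounds yields only the trivial $\tau(M)\le 1$. Even if you drop part (ii) of the genie, the residual step --- the ``rank/counting argument'' that the isolated $(2M{+}1)$-user sub-network with cooperation order $M$ has sum DoF at most $2M$ --- is precisely the finite-$K$ instance of the theorem you are trying to prove; the fact that each message sits on at most $M$ of the $2M+1$ transmitters does not by itself preclude $2M+1$ DoF, and no linear-algebraic count settles it. You would need an information-theoretic argument there anyway.

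The paper avoids the block decomposition entirely and applies Lemma~\ref{thm:dofouterbound} once, globally. It takes $\bar{\cal A}=\{(2M{+}1)(j{-}1)+M{+}1:j\ge 1\}$, so $|{\cal A}|=\frac{2M}{2M+1}K+o(K)$. The point of Corollary~\ref{cor:usefulmsgdistribution} is not merely locality but the specific consequence that for every $i\in\bar{\cal A}$ one has ${\cal T}_i\subseteq\{i-M,\dots,i+M-1\}$, which \emph{never} contains a multiple of $2M+1$; hence every transmitter $X_s$ with $s\equiv 0\pmod{2M+1}$ carries only messages in $W_{\cal A}$ and so $X_s\in X_{\bar U_{\cal A}}$. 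These $X_s$ act as anchors: from $X_{s}$ and the noise-free $Y_s,Y_{s-1},\dots$ (all in $Y_{\cal A}$) one peels off $X_{s-1},X_{s-2},\dots$ down to $X_{s-M}$, and from $X_s$ and $Y_{s+1},Y_{s+2},\dots$ one peels upward to $X_{s+M}$, meeting the next anchor. This reconstructs every $X_j$ (after discarding the first $M$ transmitters, an $o(K)$ loss), so Lemma~\ref{thm:dofouterbound} gives $\eta\le|{\cal A}|$ directly. The missing idea in your proposal is this anchor-and-peel reconstruction; once you see it, no per-block sub-problem remains.
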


{\bf Coding Scheme:}
We treat the network as a set of clusters, each consisting of consecutive $2M+1$ transceivers. The last transmitter of each cluster is deactivated to eliminate {\em inter-cluster} interference. It then suffices to show that $2M$ DoF can be achieved in each cluster. Without loss of generality, consider the cluster with users of indices in the set $[2M+1]$. We define the following subsets of $[2M+1]$,
\begin{eqnarray*}
{\cal S}_1 &=& [M]
\\{\cal S}_2 &=& \{M+2,M+3,\ldots,2M+1\}
\end{eqnarray*}
We next show that each user in ${\cal S}_1 \cup {\cal S}_2$ achieves one degree of freedom while message $W_{M+1}$ is not transmitted. In the proposed scheme, users in the set ${\cal S}_1$ are served by transmitters in the set $\{X_1,X_2,\ldots,X_M\}$ and users in the set ${\cal S}_2$ are served by transmitters in the set $\{X_{M+1},X_{M+2},\ldots,X_{2M}\}$. Let the message assignments be as follows.\\

${\cal T}_{i}=
\begin{cases}
\{i,i+1,\ldots,M\}, \quad &\forall i \in {\cal S}_1\\
\{i-1,i-2,\ldots,M+1\},\quad &\forall  i \in {\cal S}_2
\end{cases}$\\

Now, due to the availability of channel state information at the transmitters, the transmit beams for message $W_i$ can be designed to cancel its effect at receivers with indices in the set ${\cal C}_i$, where,\\

${\cal C}_{i}=
\begin{cases}
\{i+1, i+2, \ldots,M\},\quad  &\forall i \in {\cal S}_1\\
\{i-1, i-2, \ldots,M+2\},\quad &\forall  i \in {\cal S}_2
\end{cases}$\\

Note that both ${\cal C}_M$ and ${\cal C}_{M+2}$ equal the empty set, as both $W_M$ and $W_{M+2}$ do not contribute to interfering signals at receivers in the set $Y_{{\cal S}_1} \cup Y_{{\cal S}_2}$. The above scheme for $M=3$ is illustrated in Figure~\ref{fig:mthree} (a).
We conclude that each receiver with index in the set ${\cal S}_1\cup{\cal S}_2$ suffers only from Gaussian noise, thereby enjoying one degree of freedom.

It is worth noting that the proposed coding scheme is similar to that of~\cite{Shamai-Wigger-ISIT11}. More specifically, the scheme suggested in~(\cite{Shamai-Wigger-ISIT11}, Remark $2$) can be used to achieve a per user DoF of $\frac{2M-1}{2M}$. However, the proposed scheme achieves a higher value as we do not insist on assigning each message to the transmitter with the same index. 

{\bf Outer Bound}
We first provide a general lemma that is used for upper bounding the available degrees of freedom for reliable communication over a multiuser channel. Converse proofs that use a similar argument as the following lemma exist in the literature (see, e.g.,~\cite{Shamai-Wigger-ISIT09}).

For any set ${\cal A} \subseteq [K]$, Define $U_{\cal A} = \cup_{i \notin {\cal A}} {\cal T}_i$. Assume that $\eta$ degrees of freedom are available for the considered channel.
\begin{lem}\label{thm:dofouterbound}
If there exists a set ${\cal A}\subseteq [K]$ and a function $f$, such that $f\left(Y_{\cal A},Z_{\cal A},X_{\bar{U}_{\cal A}}\right)=X_{U_{\cal A}}$, then $\eta \leq |{\cal A}|$. 
\end{lem}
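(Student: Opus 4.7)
The plan is a genie-aided converse: introduce a super-decoder with access to the triple $(Y_\mathcal{A}, Z_\mathcal{A}, X_{\bar{U}_\mathcal{A}})$, show it can decode every message $W_{[K]}$ reliably, and then use Fano's inequality plus a careful mutual-information bound to conclude that the sum DoF is at most $|\mathcal{A}|$.

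First, I build the super-decoder. Applying $f$ to the triple produces $X_{U_\mathcal{A}}$; combined with the genie-provided $X_{\bar{U}_\mathcal{A}}$, this reveals the entire transmit vector $X_{[K]}$. Hence for every $i \in [K]$ the noise-free received signal $\sum_j h_{ij} X_j$ is computable. For $i \in \mathcal{A}$ the super-decoder uses the true $Y_i$ (it has $Z_i$); for $i \notin \mathcal{A}$ it draws an independent $\tilde Z_i$ with the same law as $Z_i$ and forms $\tilde Y_i := \sum_j h_{ij} X_j + \tilde Z_i$, which matches the distribution of the true $Y_i$ conditional on $X_{[K]}$. Running each receiver's original decoder on $Y_i$ or $\tilde Y_i$ yields estimates $\hat W_i$ with vanishing error probability, since the simulated system is stochastically identical to the original one from each receiver's point of view. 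Fano's inequality then gives $H(W_{[K]} \mid Y_\mathcal{A}, Z_\mathcal{A}, X_{\bar{U}_\mathcal{A}}) \leq n\epsilon_n$ with $\epsilon_n \to 0$, and hence
\[
 n \sum_{i \in [K]} R_i \;=\; H(W_{[K]}) \;\leq\; I\bigl(W_{[K]}; Y_\mathcal{A}, Z_\mathcal{A}, X_{\bar{U}_\mathcal{A}}\bigr) + n\epsilon_n.
\]

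Next I bound the mutual information by $n|\mathcal{A}| \log P + O(n)$. The two key structural facts are: (i) $Z_\mathcal{A}$ is independent of $W_{[K]}$, and (ii) $X_{\bar{U}_\mathcal{A}}$ is a deterministic function of $W_\mathcal{A}$ alone, because by the definition of $\bar{U}_\mathcal{A}$ every transmitter in it carries only messages whose indices lie in $\mathcal{A}$. Since $Y_\mathcal{A} - H X_{\bar{U}_\mathcal{A}} - Z_\mathcal{A} = H X_{U_\mathcal{A}}$, the only ``new'' signal contributed by the observation is the $|\mathcal{A}|$-dimensional vector $HX_{U_\mathcal{A}}$. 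Standard high-SNR bounds on the differential entropy of an $|\mathcal{A}|$-dimensional Gaussian-input-compatible signal under a total power constraint proportional to $P$ give an estimate of the form $n|\mathcal{A}|\log P + O(n)$ on the effective conditional differential entropy, from which $\eta \leq |\mathcal{A}|$ follows after dividing by $n \log P$ and letting $P \to \infty$.

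The main obstacle is making the mutual-information bound sharp. A naive decomposition $H(W_{[K]}) = H(W_\mathcal{A}) + H(W_{\mathcal{A}^c})$ together with the trivial single-antenna bound $H(W_\mathcal{A}) \leq n|\mathcal{A}|\log P$ on one side and the genie-aided bound on the other yields only $\eta \leq 2|\mathcal{A}|$, which is too weak. The sharper argument---of the Shamai--Wigger style cited in the paper---refuses this split and instead manipulates the joint mutual information $I(W_{[K]}; Y_\mathcal{A}, Z_\mathcal{A}, X_{\bar{U}_\mathcal{A}})$ directly, exploiting that the information carried by $X_{\bar{U}_\mathcal{A}}$ is already implicit in $Y_\mathcal{A}$ once the noise $Z_\mathcal{A}$ is removed, so the effective signal dimension remains $|\mathcal{A}|$ rather than doubling.
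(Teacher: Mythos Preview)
Your super-decoder construction is fine and does establish $H(W_{[K]}\mid Y_{\cal A}^n,Z_{\cal A}^n,X_{\bar U_{\cal A}}^n)\le n\epsilon_n$. The gap is in the second half: the mutual information $I(W_{[K]};Y_{\cal A}^n,Z_{\cal A}^n,X_{\bar U_{\cal A}}^n)$ is \emph{not} bounded by $|{\cal A}|\,n\log P+O(n)$, and your heuristic that ``the effective signal dimension remains $|{\cal A}|$'' does not hold. Once the genie supplies $Z_{\cal A}$, the noisy observation $Y_{\cal A}$ is equivalent to the noiseless linear combination $H'X_{U_{\cal A}}$ (after subtracting $Z_{\cal A}$ and the known contribution of $X_{\bar U_{\cal A}}$). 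This is a \emph{discrete} random variable, a deterministic function of $W_{[K]}$, and its entropy is not controlled by a power constraint or by the dimension $|{\cal A}|$. Concretely, chain-ruling gives
\[
I(W_{[K]};Y_{\cal A},Z_{\cal A},X_{\bar U_{\cal A}})
=H(X_{\bar U_{\cal A}})+H\bigl(H'X_{U_{\cal A}}\,\big|\,X_{\bar U_{\cal A}}\bigr),
\]
which can be as large as $H(W_{[K]})=n\sum_i R_i$ itself, so the resulting inequality is circular. (Try $K=2$, ${\cal A}=\{1\}$, ${\cal T}_1=\{2\}$, ${\cal T}_2=\{1\}$: the hypothesis of the lemma holds, the true DoF is $1$, yet your mutual-information bound collapses to $n(R_1+R_2)\le n(R_1+R_2)+n\epsilon$.) Your last paragraph correctly names the obstacle but the proposed fix---``the information in $X_{\bar U_{\cal A}}$ is already implicit in $Y_{\cal A}$ once $Z_{\cal A}$ is removed''---is false in general and, more importantly, removing $Z_{\cal A}$ is precisely what breaks the $\log P$ scaling.

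The paper's proof avoids this by \emph{never} handing $Z_{\cal A}$ (or $X_{\bar U_{\cal A}}$) to a decoder. It applies Fano separately, bounding $n\sum_i R_i\le I(W_{\cal A};Y_{\cal A}^n)+I(W_{\bar{\cal A}};Y_{\bar{\cal A}}^n)+nK\epsilon$, and then enlarges only the second term with the genie $(W_{\cal A},Y_{\cal A}^n)$. The point is that $h(Y_{\cal A}^n\mid W_{\cal A})$ appears with a minus sign in $I(W_{\cal A};Y_{\cal A}^n)$ and with a plus sign in $I(W_{\bar{\cal A}};Y_{\cal A}^n\mid W_{\cal A})$, so it cancels; what survives is $h(Y_{\cal A}^n)\le |{\cal A}|\,n\log P+O(n)$ plus the residual $h(Y_{\bar{\cal A}}^n\mid W_{\cal A},Y_{\cal A}^n)$. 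The function $f$ is invoked only at this last stage, after \emph{adding} $Z_{[K]}$ inside the entropy (which can only increase it), to show this residual is bounded by noise entropy and hence is $n\cdot o(\log P)$. The telescoping, not a dimension count on a noiseless genie, is what keeps the bound at $|{\cal A}|$ rather than $2|{\cal A}|$.
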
 
\begin{proof}
Proof is available in Appendix.
\end{proof}

In order to prove the converse, we use Lemma~\ref{thm:dofouterbound} with a set ${\cal A}$ of size $K\frac{2M}{2M+1}+o(K)$. We also prove the upper bound for the channel after removing the first $M$ transmitters $\left(X_{[M]}\right)$, while noting that this will be a valid bound on $\tau(M)$ since the number of removed transmitters is $o(K)$. 

Inspired by the coding scheme, we define the set ${\cal A}$ as the set of receivers that are {\em active} in the above described strategy. i.e., the complement set ${\bar{\cal A}}=\{i: i\in[K], i= (2M+1)(j-1)+M+1, j \in {\bf{Z}^+}\}$. We know from Corollary~\ref{cor:usefulmsgdistribution} that messages belonging to the set $W_{\bar{\cal A}}$ do not contribute to transmit signals with indices that are multiples of $2M+1$, i.e., $i \notin U_{\cal A}$ for all $i\in[K]$ that is a multiple of $2M+1$. More precisely, let the set ${\cal S}$ be defined as follows:
\begin{equation*}
{\cal S} = \{i: i\in[K], i \text{ is a multiple of } 2M+1\}
\end{equation*}
then ${\cal S}\subseteq {\bar{U}_{\cal A}}$. In particular, $X_{\cal S} \subseteq X_{\bar{U}_{\cal A}}$, hence it suffices to show the existence of a function $f$ such that $f\left(Y_{\cal A},Z_{\cal A},X_{\cal S}\right)=X_{\bar{S}} \backslash X_{[M]}$. We show in what follows how to reconstruct the signals in the set $\{X_{M+1},X_{M+2},\ldots,X_{2M}\}\cup\{X_{2M+2},X_{2M+3},\ldots,X_{3M+1}\}$, then it will be clear by symmetry how to reconstruct the rest of transmit signals in the set $X_{\bar{\cal S}}\backslash X_{[M]}$. Since $X_{2M+1} \in X_{\cal S}$, and a noise free version of ${Y_{2M+1}}$ is also given, then $X_{2M}$ can be reconstructed. Now, with the knowledge of $X_{2M}$, $Y_{2M}$, and $Z_{2M}$, we can reconstruct $X_{2M-1}$, and so by iterative processing all transmit signals in the set $\{X_{M+1},X_{M+2},\ldots,X_{2M}\}$ can be reconstructed. In a similar fashion, given $X_{2M+1}$, $Y_{2M+2}$, and $Z_{2M+2}$, the signal $X_{2M+2}$ can be reconstructed, then with a noise free version of $Y_{2M+3}$, we can reconstruct $X_{2M+3}$, and we can proceed along this path to reconstruct all transmit signals in the set $\{X_{2M+2},X_{2M+3},\ldots,X_{3M+1}\}$. In Figure~\ref{fig:mthree} (b), we illustrate how the proof works for the case where $M=3$. This proves the existence of the function $f$ defined above, and so by Lemma~\ref{thm:dofouterbound} we obtain the converse of Theorem~\ref{thm:Asymmetric Model}.

\begin{figure}
  \centering
  
\subfloat[]{\label{fig:mthreejonecs}\includegraphics[height=0.283\textwidth]{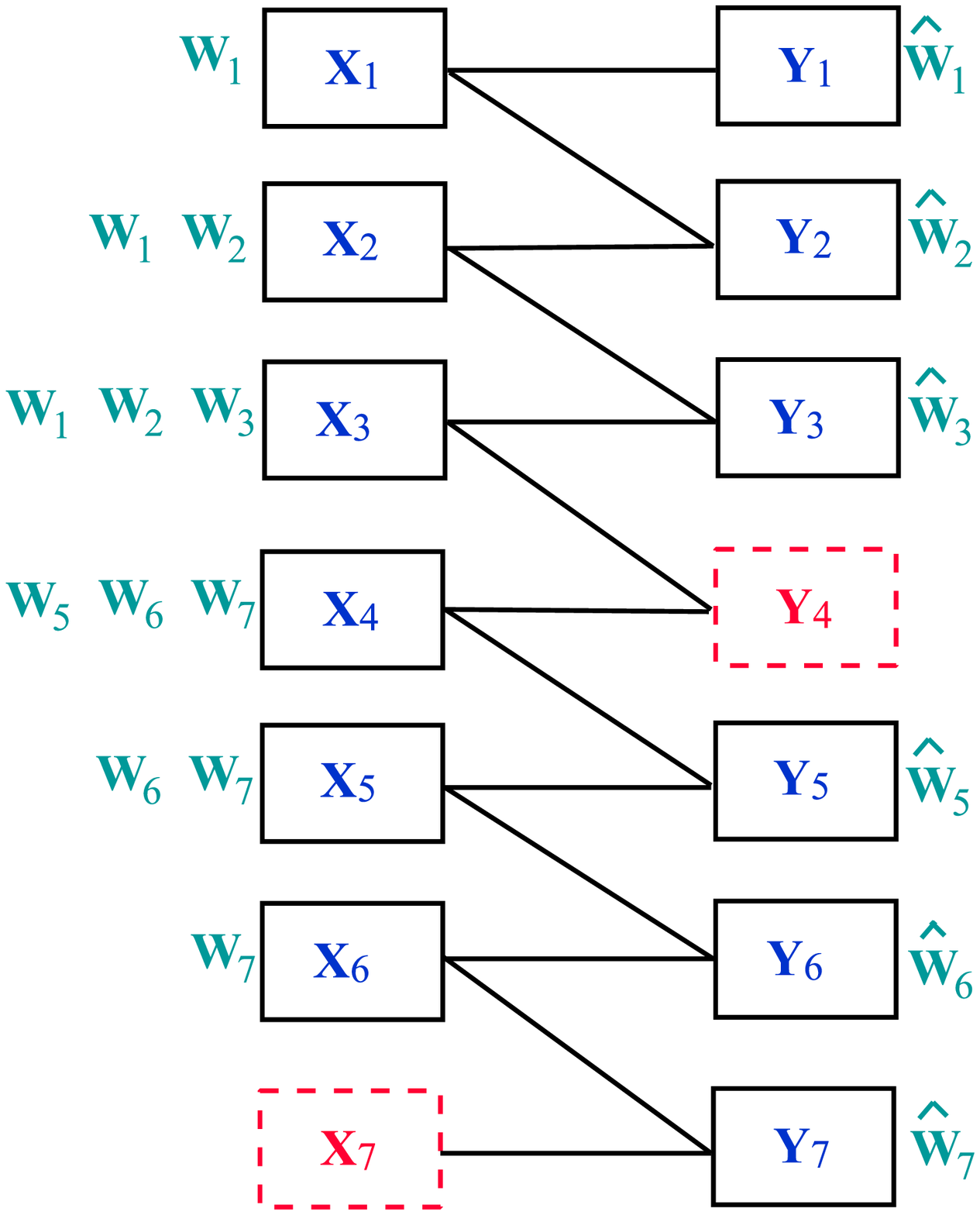}}                
\quad\quad\quad\quad\subfloat[]{\label{fig:mthreejoneub}\includegraphics[width=0.17\textwidth]{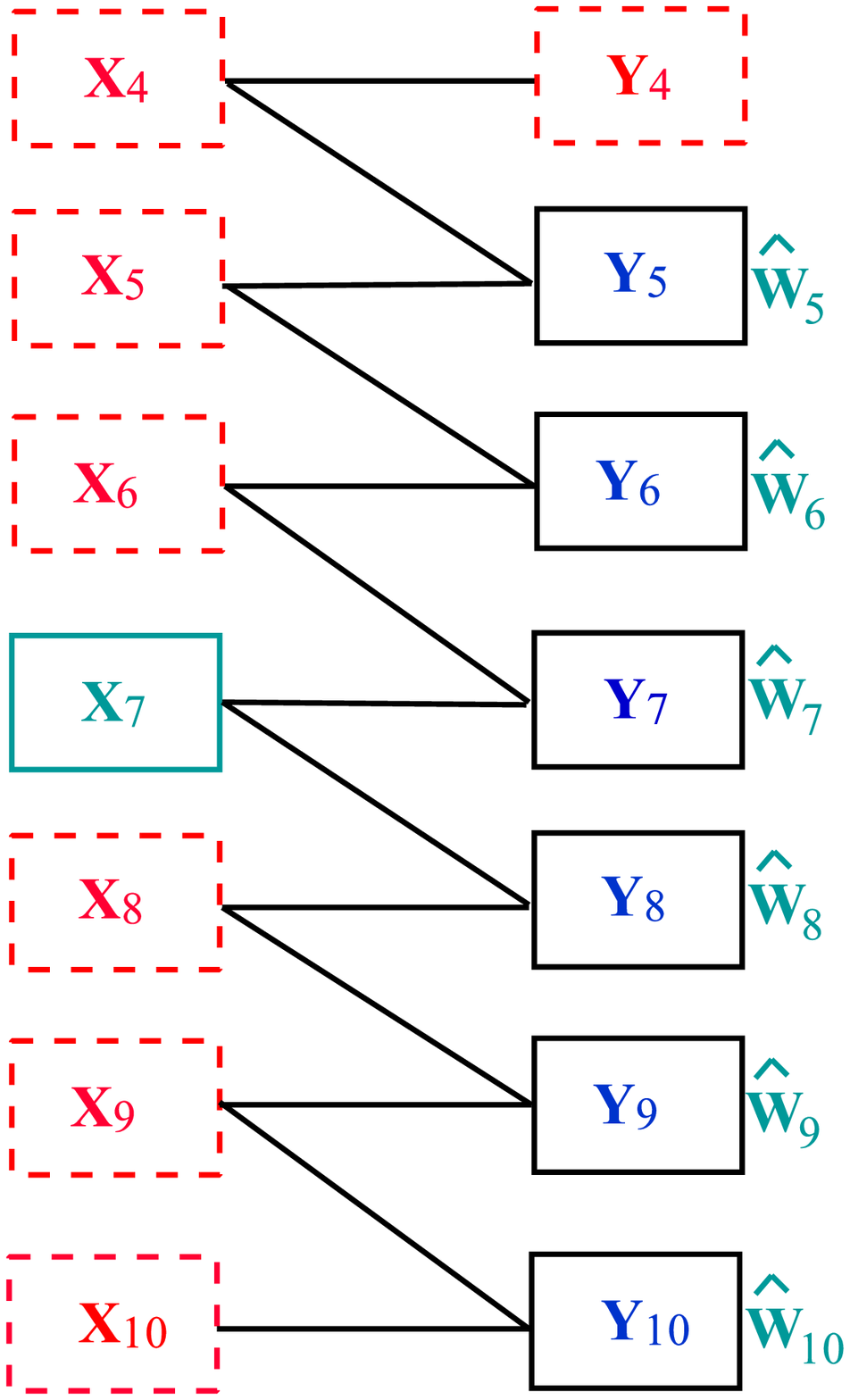}}
  \caption{Figure illustrating the proof of Theorem~\ref{thm:Asymmetric Model} for $M=3$, $\tau(3)=\frac{6}{7}$. In ($a$), the message assignments in the first cluster for the proposed coding scheme are illustrated. Note that both $X_7$ and $Y_4$ are deactivated. In ($b$), an illustration of the upper bound is shown. The messages $W_4$ and $W_{11}$ cannot be available at $X_7$, hence it can be reconstructed from $W_{\cal A}$. All transmit signals shown in figure can be reconstructed from $X_7$ and noise free versions of $\{Y_5,\ldots,Y_{10}\}$}
  \label{fig:mthree}
\end{figure}

\section{Discussion}\label{sec:discussion}
\subsection{Practical Simplicity of the Coding Scheme}
Unlike the fully connected Gaussian interference channel where the DoF of the channel cannot be achieved through linear precoding strategies over finitely many independent channel realizations (see~\cite{Cadambe-IA},~\cite{Bresler-Tse-DoF}), the proposed DoF achieving coding scheme for Wyner's linear model does not involve coding over multiple channel realizations. 

Another aspect justifying the practical simplicity of the proposed coding scheme is the fact that the employed message assignments satisfy a \emph{local cooperation} constraint, that is, regardless of the number of users, message $W_i$ can only be assigned to transmitters whose indices lie within a fixed radius from the index $i$. More precisely, let $r$ be the fixed radius then,
\begin{equation}\label{eq:local_coop}
{\cal T}_i \subseteq [i-r,i+r]
\end{equation}
 We can clearly see that the proposed coding scheme satisfies the constraint in~\eqref{eq:local_coop} with $r=M$, regardless of the number of users $K$.    

\subsection{Extension to Locally Connected Channels}
Consider a symmetric generalization of Wyner's model where each transmitter is connected to $L/2$ following receivers and $L/2$ preceding receivers. More precisely, the channel model is given by,

\begin{equation}\label{eq:generalized_channel}
h_{ij} \neq 0 \text { if and only if } j \in \left[i- \left \lfloor \frac{L}{2} \right \rfloor , i+ \left \lceil \frac{L}{2} \right \rceil \right]
\end{equation}

We note that $L$ is the number of interfering signals at each receiver and for $L=1$ and $L=2$, the channel reduces the commonly known Wyner's asymmetric and symmetric linear models, respectively. It can be shown that the proposed coding scheme in Section~\ref{sec:mainresult} can be generalized to prove that $\tau(M) \geq \frac{2M}{2M+L}$. 


\section{Conclusion}\label{sec:conclusion}
In this paper, we characterized the degrees of freedom of Wyner's linear asymmetric model under a cooperation constraint that bounds the size of the transmit sets. The per user DoF was shown to be strictly greater than that achievable by the transmit sets previously considered in~\cite{wigger-wynerasymm}. The proposed coding scheme is simple from a practical viewpoint as it uses zero forcing transmit beams and needs only a single channel realization to achieve the DoF of the channel. Moreover, the proposed message assignment satisfies a local cooperation constraint, where each message can only be available at neighboring transmitters.
\appendix
\section*{Proof of Lemma~\ref{thm:dofouterbound}}
In order to prove the lemma, we show that using a reliable communication scheme with the aid of a signal that is within $o(\log P)$, all the messages can be recovered from the set of received signals $Y_{\cal A}$. It follows that any achievable degree of freedom for the channel is also achievable for another channel that has only those receivers, thus proving the upper bound. 

In any reliable $n$-block coding scheme, \[H(W_i|Y_i^n) \leq n\epsilon, \forall i \in [K].\]
Therefore, \[H(W_{\cal A}|Y_{\cal A}^n) \leq \sum_{i \in {\cal A}} H(W_i|Y_i^n) \leq n|{\cal A}|\epsilon.\]
Now, the sum $\sum_{i \in [K]} R_i = \sum_{i \in \bar{{\cal A}}} R_i + \sum_{i \in {\cal A}}R_i$ can be bounded as
\begin{eqnarray}\label{eq:lemma_tmp1} 
n\left(\sum_{i \in \bar{\cal A}} R_i + \sum_{i \in {\cal A}}R_i\right) & = & H(W_{\bar{\cal A}}) + H(W_{\cal A})\notag  \\
& \leq & I\left(W_{\bar{\cal A}};Y_{\bar{\cal A}}^n\right) + I\left(W_{\cal A};Y_{\cal A}^n\right)\notag\\&&+nK \epsilon. 
\end{eqnarray}
where $\epsilon$ can be made arbitrarily small, by choosing $n$ large enough. The two terms on the right hand side of \eqref{eq:lemma_tmp1} can be bounded as
\begin{eqnarray*}
I\left(W_{\cal A};Y_{\cal A}^n\right) 
& = &  \entropy{Y_{\cal A}^n} - \entropy{Y_{\cal A}^n|W_{\cal A}}\\
& \leq &  \sum_{i \in {\cal A}}\sum_{t = 1}^n \entropy{Y_i(t)} - \entropy{Y_{\cal A}^n|W_{\cal A}} \\
& = &  |{\cal A}|n\log P + n(o(\log P)) - \entropy{Y_{\cal A}^n|W_{\cal A}}
\end{eqnarray*}
\begin{eqnarray*}
I\left(W_{\bar{\cal A}};Y_{\bar{\cal A}}^n\right) & \leq & \ I\left(W_{\bar{\cal A}};Y_{\bar{\cal A}}^n,Y_{\cal A}^n,W_{\cal A}\right) \\
& = &  I(W_{\bar{\cal A}};Y_{\cal A}^n|W_{\cal A}) + I(W_{\bar{\cal A}};Y_{\bar{\cal A}}^n|W_{\cal A},Y_{\cal A}^n) \\
& = &  \entropy{Y_{\cal A}^n|W_{\cal A}}  - \entropy{Z_{\cal A}^n} + \entropy{Y_{\bar{\cal A}}^n|W_{\cal A},Y_{\cal A}^n}\\ &&- \entropy{Z_{\bar{\cal A}}^n}.
\end{eqnarray*}
Now, we have
\begin{eqnarray*}
I\left(W_{\cal A};Y_{\cal A}^n\right)  + I\left(W_{\bar{\cal A}};Y_{\bar{\cal A}}^n\right) &\leq&
  |{\cal A}|n\log P  + \entropy{Y_{\bar{\cal A}}^n|W_{\cal A},Y_{\cal A}^n}\\&&  + n(o(\log P)).
\end{eqnarray*}
Therefore, if we show that \[\entropy{Y_{\bar{\cal A}}^n|W_{\cal A},Y_{\cal A}^n} = n(o(\log P)),\] then from \eqref{eq:lemma_tmp1}, we have the required outer bound. Since $W_{\cal A}$ contains all the messages carried by transmitters in $X_{\bar{U}_{\cal A}}$, they determine those input signals for the $n$ channel uses. Therefore,
\begin{eqnarray*}
\entropy{Y_{\bar{\cal A}}^n|W_{\cal A},Y_{\cal A}^n}  &=&  \entropy{Y_{\bar{\cal A}}^n|W_{\cal A},Y_{\cal A}^n,X_{\bar{U}_{\cal A}}^n} \\
&\leq& \entropy{Y_{\bar{\cal A}}^n|Y_{\cal A}^n,X_{\bar{U}_{\cal A}}^n} \\
& \leq & \sum_{t = 1}^{n} \entropy{Y_{\bar{\cal A}}(t)|Y_{\cal A}(t),X_{\bar{U}_{\cal A}}(t)} \\
&\leq& \sum_{t = 1}^{n} \entropy{Y_{\bar{\cal A}}(t), Z_{[K]}(t)|Y_{\cal A}(t),X_{\bar{U}_{\cal A}}(t)}
\\&=& \sum_{t = 1}^{n} \entropy{Y_{\bar{\cal A}}(t),Z_{[K]}(t),Y_{\cal A}(t),X_{\bar{U}_{\cal A}}(t)}\\&& - \entropy{Y_{\cal A}(t),X_{\bar{U}_{\cal A}}(t)}
\\&\overset{(a)}{=}&\sum_{t = 1}^{n} \entropy{Z_{[K]}(t),Y_{\cal A}(t),X_{\bar{U}_{\cal A}}(t)}\\&& - \entropy{Y_{\cal A}(t),X_{\bar{U}_{\cal A}}(t)}
\\&=&\sum_{t = 1}^{n}\entropy{Z_{[K]}(t)|Y_{\cal A}(t),X_{\bar{U}_{\cal A}}(t)}
\\&=& n(o(\log P))
\end{eqnarray*}
where $(a)$ follows from the existence of the function $f$ by the statement of the Lemma, as given $Y_{\cal A}, X_{\bar{U}_{\cal A}},Z_{\cal A}$, then $X_{U_{\cal A}}$ can be recovered, and hence $Y_{\bar{\cal A}}$, as $Z_{\bar{\cal A}}$ is given.


\begin{thebibliography}{6}
\bibitem{wigger-wynerasymm}
A.~Lapidoth, S.~Shamai (Shitz) and M.~A.~Wigger, ``A linear interference network with local Side-Information,'' in \emph{Proc. IEEE Int. Symp. on Information Theory (ISIT)}, Nice, Jun. 2007.


\bibitem{CoMP-book}
P.~Marsch and G.~P.~Fettweis ``Coordinated Multi-Point in Mobile Communications: from theory to practice,''  First Edition, \emph{Cambridge}, 2011.
\bibitem{Madsen-Nosratinia}
A.~Host-Madsen and A.~Nosratinia, ``The multiplexing gain of wireless networks,'' in \emph{Proc. IEEE Int. Symp. on Information Theory (ISIT)}, Nice, Jun. 2007.

\bibitem{Cadambe-IA}
V.~Cadambe and S.~Jafar, ``{I}nterference {A}lignment and {D}egrees of
  {F}reedom of the {K}-{U}ser {I}nterference {C}hannel,'' \emph{IEEE Trans.
  Info Theory}, vol.~54, no.~8, pp. 3425 --3441, Aug. 2008.


\bibitem{Wyner}
A.~Wyner, ``{S}hannon-{T}heoretic {A}pproach to
  a {G}aussian {C}ellular {M}ultiple-{A}ccess {C}hannel,'' \emph{IEEE Trans.
  Info Theory}, vol.~40, no.~5, pp. 1713 --1727, Nov. 1994.

\bibitem{Bresler-Tse-DoF}
G.~Bresler and D.~Tse ``Degrees-of-freedom for the 3-user Gaussian interference
channel as a function of channel diversity" in \emph{Allerton Conference on Communication, Control, and Computing,} September, 2009

\bibitem{Shamai-Wigger-ISIT09}
A.~Lapidoth, N.~Levy, S.~Shamai (Shitz) and M.~A.~Wigger, ``A cognitive network with clustered decoding,'' in \emph{Proc. IEEE International Symposium on Information Theory (ISIT)}, Seoul, Jul. 2009.

\bibitem{Shamai-Wigger-ISIT11}
S.~Shamai (Shitz) and M.~A.~Wigger, ``Rate-Limited Transmitter-Cooperation in Wyner's asymmetric interference network,'' in \emph{Proc. IEEE International Symposium on Information Theory (ISIT)}, Saint Petersburg, Aug. 2011.







\end{thebibliography}
\end{document}